\newcommand{\N}{\mathbb{N}}
\newcommand{\fs}{2^{<\omega}}
\newcommand{\K}{\ensuremath{\mathrm{K}} }
\newcommand{\C}{\ensuremath{\mathrm{C}} }
\newcommand{\A}{\ensuremath{\mathrm{A}} }
\newcommand{\B}{\ensuremath{\mathrm{B}} }
\newcommand{\rem}{\noindent \textbf{Remark:} }
\newcommand{\remk}[1]{\noindent \textbf{Remark #1:} }
\title{Towards an axiomatic system for Kolmogorov complexity} 
\author{Antoine Taveneaux}  \institute{ LIAFA, CNRS \& Universit\'e de Paris 7 }
\begin{document}

\maketitle

\begin{abstract}
In~\cite{shenpapier82}, it is shown that four of its basic functional properties are enough to characterize plain Kolmogorov complexity, hence obtaining an axiomatic characterization of this notion. In this paper, we try to extend this work, both by looking at alternative axiomatic systems for plain complexity and by considering potential axiomatic systems for other types of complexity. First we show that the axiomatic system given by Shen cannot be weakened (at least in any natural way). We then give an analogue of Shen's axiomatic system for conditional complexity. 
In the second part of the paper, we look at prefix-free complexity and try to construct an axiomatic system for it.  We show however that the natural analogues of Shen's axiomatic systems fail to characterize prefix-free complexity.

\end{abstract}

\section{Introduction}

The concept of Kolmogorov complexity was introduced independently by Kolmogorov (in \cite{kolmo65}) and Chaitin (in \cite{papierchaitin}). The aim of Kolmogorov complexity is to quantify the amount of ``information'' contained in finite objects, such as binary strings. This idea can be used to give an answer to the philosophical question, ``What does it mean for a single object to be random?'' 

The usual definition of Kolmogorov complexity uses the existence of an optimal Turing machine. However, it is not immediate from that definition that Kolmogorov complexity is satisfactory as a measure of information. One is only convinced after deriving certain fundamental facts about it, such as: most strings have maximal complexity, the complexity of a pair $(x,y)$ is not (much) greater than the sum of the complexities of~$x$ and $y$, etc. Therefore, a natural question is to ask whether there exists an axiomatic system characterizing Kolmogorov complexity uniquely via some of its functional properties. And of course, as with any axiomatic system, we want the axiomatic system to be minimal, i.e.\ to contain no superfluous axiom.

 Such a characterization was given by Shen in \cite{shenpapier82} for plain complexity. In Section~\ref{sectionplain} we recall this characterization and adapt it to provide an axiomatic system for conditional complexity. We then study whether we can weaken the hypotheses of this characterization of plain complexity in a natural way and show that it is indeed not possible. In particular, one of the hypotheses in the theorem states that applying a partial computable function to a string does not increase its Kolmogorov complexity (up to an additive constant), and we show that this hypothesis cannot be restricted to total computable functions. To show that we need the power of  partial computable functions to characterize plain complexity, we introduce a notion of complexity for functions that are total on a initial segment of the integers; this notion of complexity is robust under the application of total computable functions, but differs from Kolmogorov complexity. 
 
A second natural question would be, ``Is there a similar axiomatic system for prefix-free Kolmogorov complexity?" Unlike the plain complexity case, we show that the classical properties of prefix-free complexity are not sufficient to characterize it. Since prefix-free complexity is greater than plain complexity, we have to choose a larger upper bound and a tighter lower bound to characterize {\K} (where $\K (x)$ is prefix-free complexity of the string $x$, see below). Actually, all basic upper bounds on prefix-free complexity fail to characterize it. To show that our the classical properties of prefix-free complexity do not characterize it, we construct a counter-example defined by  $\A = \K + f $ with~$f$ a very slow growing function. To build such a slow function, we define an operator that slows down sub-linear non-decreasing functions while preserving their computational properties (computability or semicomputability). 

Throughout the paper we will identify natural numbers and finite strings in a natural way (the set of finite strings is denoted by $\fs$). We denote by~$\log (x)$ the discrete binary logarithm of $x$. We fix an effective enumeration of the Turing machines and we denote by $\psi_e$ the function computed by the $e^\text{th}$ machine. For each machine~$T$ and string~$x$, the complexity of $x$ relatively to $T$ is: $$\C_T (x )= \min \{n \, | \, \exists y \in \fs \text{ such that }  |y |=n \text{ and } T(y )= x \} $$ and throughout the paper we fix an optimal machine~$\mathbb{U}$ (i.e.\ a machine such that for all machines $T$ we have $\C_{\mathbb{U}} \leq \C_T + O(1)$, see \cite{Nies_2009} for a existence proof of a such a machine) and set $\C = \C_{\mathbb{U}}$. In the same way, we fix an optimal prefix-free machine $\mathbb{U}'$  (i.e. a machine with prefix-free domain) and set $\K = \C_{\mathbb{U}'}$. $\C (x) $ and $\K (x)$ denote the plain complexity and prefix-free complexity of~$x$, respectively. 

Conditional Kolmogorov complexity is an extension of the above notions which quantifies the information of a string~$x$ \emph{relative} to another string~$y$. More precisely, the complexity of $x$ given $y$, relative to the machine $T$, is: $$\C_T (x|y) =\min \{n \, | \, \exists z \in \fs \text{ such that }  |z |=n \text{ and } T(\langle z,y \rangle )= x \} $$
 As above we can define $\C (.| .) = \C_{\mathbb{U}} (. |. )$ and $\K (.| .) = \C_{\mathbb{U}'} (. |. )$.

\section{Plain complexity}\label{sectionplain}

As mentioned above, Shen showed in~\cite{shenpapier82} that four basic properties are sufficient to fully characterize plain Kolmogorov complexity:
\begin{enumerate}

\item Upper-semicomputability: $\C$ is not computable but it is upper semicomputable  (i.e.\ the predicate $\C (x) \leq k$ is uniformly computably enumerable in $x$ and $k$). 

\item Stability: a recursive function cannot increase the complexity of a string by more than an additive constant. 

\item Explicit description: the length of the smallest description of a string (i.e. its plain complexity) is not much bigger than the string itself. 

\item Counting:  no more than  $2^{n}$ of the strings have a complexity less than $n$.
\end{enumerate}

Formally, Shen's theorem states the following. 

\begin{theorem}\cite{shenpapier82} \label{shenC}
Let $\A: \fs \rightarrow \N$ be some function. Suppose \A satisfies the following four properties:
\begin{enumerate}

\item \label{hypshen1} \A is upper semi-computable.

\item \label{hypshen2} For every partial computable function $f: \fs \rightarrow \fs $  there exists a constant $c_f$ such that for each $\A (f(x))\leq \A (x) +c_f $ for each $x \in \fs$.

\item \label{hypshen3}  $ \A (x)\leq |x |  +O(1) $ for all $x\in \fs$. 
\item \label{hypshen4} $|\{ x | \A (x) \leq n \}|=O(2^{n})$.
\end{enumerate}

Then $\A (x) = \C (x) +O(1)$.

\end{theorem}

\begin{proof}
We give a quick sketch of the proof.

To show  $\A \leq \C +O(1)$, let $x^*$ denote the shortest description of $x$ (for the complexity \C). By hypotheses \ref{hypshen2} and \ref{hypshen3} we have: $$\A (x) = \A (\mathbb{U} (x^*)) \leq \A (x^*) +O(1) \leq |x^*| + O(1) = \C (x) + O(1).$$

To show $\C \leq \A +O(1)$, we consider~$y$ and~$n$ such that $\A(y) = n$. Since \A is upper semi-computable, the set $\{ x | \A (x) \leq n \}$ is uniformly computably enumerable. Since there exists a uniform~$d$ such that $|\{ x | \A (x) \leq n \}|=2^{n+d}$,  we can describe $y$ with only $n+d$ bits (this description $\overline{y}$ such that $|\overline{y} | =n+d$ represents the rank of~$y$ in an enumeration of $\{ x | \A (x) \leq |\overline{y} | - d \}$). So, for all~$y$ we have $ \C (y) \leq n +d +O(1) = \A (y) +O(1)$. \qed

\end{proof}

\remk{1} The authors of \cite{shenbouquin} show that conditions~\ref{hypshen3} and~\ref{hypshen4} can be replaced by ``There exists a constant $c$ such that $|\{ x | \A (x) \leq n \}|\in[2^{n-c},2^{n+c}] $." We can also replace conditions \ref{hypshen2} and \ref{hypshen3} by ``For every partial computable function~$f$ there exists a constant $c_f$ such that $\A (f(x))\leq | x | +c_f $ for each $x \in \fs$." Finally condition \ref{hypshen4} can be replaced by the stronger version ``$|\{ x | \A (x) \leq n -k \}|=O(2^{n-k})$." 

\remk{2} With essentially the same proof, one can show a similar result for conditional plain complexity. The following system  characterizes of conditional plain complexity:

\begin{enumerate}

\item[\textbullet] Uniformly in $  x, y \in \fs$, $\B (x|y) $ is computable from above.
\item[\textbullet] For all $  x, y \in \fs$, $ \B (x|y)\leq |x |  + O(1) $. 
\item[\textbullet] For each $  y \in \fs$ we have $|\{ x | \B (x|y) \leq n \}|=O(2^n)$ (such that $O(2^n)$ do not depend of $y$).
\item[\textbullet] For all $y$ and for every partial computable function $f$ from $\fs $ to $\fs $ there exists a constant $c_f$ such that for each $x \in \fs$: $$\B (f( x  )|y)\leq \B (x |y ) +c_f .$$ 

\end{enumerate}

To characterize the conditional aspect, we add to the four previous items the hypothesis $\text{``}\B (\langle x,y \rangle |y)\leq \B (x |y ) + O(1) \text{"}$.
Note however that replacing this last condition by $\B (x|x) = O(1)$ would not be sufficient.

\subsection{Weakening the  hypotheses}

Shen's theorem raises a natural question:  Are all 4 conditions actually needed? 
In this subsection we discuss this question. First, it is not hard to see that none of the hypotheses can be removed. 

\begin{itemize}

\item[\textbullet] We need the  hypothesis \ref{hypshen3} because the function $2 \C $ satisfies the three others hypotheses.

\item[\textbullet] The  hypothesis \ref{hypshen4} is necessary because the function $0$ satisfies the three others hypotheses.

\item[\textbullet] The hypothesis \ref{hypshen1} cannot be removed since $\C^{\emptyset'} $ (plain Kolmogorov complexity relativised to the halting problem oracle) satisfies each of three others hypotheses (and clearly differs from the unrelativized version~$\C$).

\item[\textbullet] The hypothesis \ref{hypshen2} cannot be removed because the length function satisfies the three others hypotheses.

\end{itemize}

It could however be the case that hypothesis \ref{hypshen2} be replaced by the weaker ``for all total computable functions~$f$ there exists $c_f$ such that $\A (f(x))\leq \A (x) +c_f $". Our first main result is that this is not case.

\begin{theorem}\label{shenCtot}
There exists a function  $\A:\fs \rightarrow \N $  satisfying hypotheses \ref{hypshen1}, \ref{hypshen3}, \ref{hypshen4} (of Theorem \ref{shenC}) and:
\begin{itemize}

\item[\textbullet] \label{hypshen2tot} For every total computable function $f$ from $\fs $ to $\fs $ there exists a constant $c_f$ such that $\A (f(x))\leq \A (x) +c_f $ for each $x \in \fs$. 

\item[\textbullet] $| \A (x) - \C (x)|$ is not bounded. 

\end{itemize}
\end{theorem}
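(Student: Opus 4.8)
The plan is to realise the ``complexity via machines total on an initial segment'' announced in the introduction and take it as the witness $\A$. Identify $\fs$ with $\N$ in the usual way, and call a machine \emph{proper} if its domain is an initial segment of $\N$ (if $M(y)\!\downarrow$ and $y'\le y$ then $M(y')\!\downarrow$). From the fixed enumeration $(\psi_e)$ define $M_e$ to be the restriction of $\psi_e$ to the largest initial segment on which $\psi_e$ is everywhere defined; each $M_e$ is proper, the family $(M_e)$ is uniformly computably enumerable, and it contains every proper machine (if $\psi_e$ is already proper then $M_e=\psi_e$). I then set
\[
 \A(x) \;=\; \min\{\, |y| + 2|e| + 1 \;:\; M_e(y) = x \,\},
\]
the factor $2|e|+1$ being a self-delimiting cost for the index.

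I would then verify hypotheses \ref{hypshen1}, \ref{hypshen3}, \ref{hypshen4} and total stability, all of which are routine. Upper semi-computability: ``$\A(x)\le n$'' holds iff there are $e,y$ with $|y|+2|e|+1\le n$ such that $\psi_e$ halts on every $y'\le y$ and $\psi_e(y)=x$, a computably enumerable condition. The bound \ref{hypshen3} comes from the identity, which is proper, so $\A(x)\le |x|+O(1)$. For \ref{hypshen4}, counting pairs gives $|\{x:\A(x)\le n\}|\le \sum_e 2^{\,n-2|e|}=O(2^n)$ because $\sum_e 2^{-2|e|}<\infty$; this is exactly why a self-delimiting index cost is needed. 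Finally, total stability holds because proper machines are closed under post-composition with a fixed total computable $f$: the machine $f\circ M_e$ has the same (initial-segment) domain as $M_e$, hence is proper, and for a suitable acceptable numbering attaching a fixed wrapper for $f$ raises the index length by only an additive $c_f$, so $\A(f(x))\le \A(x)+2c_f$.

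It remains to prove that $|\A-\C|$ is unbounded. Since $\A$ satisfies \ref{hypshen1}, \ref{hypshen3}, \ref{hypshen4}, the lower-bound half of the proof of Theorem~\ref{shenC} already gives $\C\le \A+O(1)$, so the task reduces to refuting $\A\le \C+O(1)$, i.e.\ to producing, for every $c$, a string that is $\C$-simple but $\A$-hard. Note first that hardcoding the shortest $\C$-program inside a proper machine that ignores all other inputs yields $\A(x)\le 2\,\C(x)+O(1)$, while the identity yields $\A(x)\le |x|+O(1)$; hence incompressible strings give no gap, and any separation must come from strings that are compressible \emph{only} through non-total machines --- precisely the regime in which the non-properness of the optimal machine $\mathbb{U}$ matters. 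The route I would follow is to observe that $S_n:=\{x:\A(x)\le n\}$ is computable from a total-machine running-time envelope $\tau(n)$ (run each candidate $M_e$ on each admissible input for $\tau(n)$ steps to detect exactly the proper ones and read off their outputs), whereas $\{x:\C(x)\le n\}$ encodes the halting problem at level $n$; the separation should then follow by showing that such total-machine envelopes cannot supply this halting information with only a constant loss.

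The hard part is exactly this last lower bound: establishing that no short proper machine produces the chosen test strings cheaply. This is where the distinction between total and partial computable functions must be made to cost something, and a pure counting argument provably cannot do it, since both $\{x:\C(x)\le n\}$ and $S_{n+c}$ have size $\Theta(2^n)$. I expect the argument to require a diagonalisation that, for each target constant $c$, selects a $\C$-simple string avoiding the enumerated proper descriptions of cost $\le \C(x)+c$, exploiting that the enumeration of $S_{n+c}$ stabilises only after unbounded total-machine running time. Once such strings are in hand, taking $f=\mathbb{U}$ and feeding it the shortest $\C$-programs shows directly that stability fails for this particular partial function, which is the conceptual reason Shen's system genuinely needs partial, and not merely total, computable functions.
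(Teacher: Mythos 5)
Your construction is in the same spirit as the paper's own proof (which also builds its counterexample from machines required to converge on an entire initial segment --- there, on all strings of length at most $|x|$), and your verification of hypotheses \ref{hypshen1}, \ref{hypshen3}, \ref{hypshen4} and of stability under total computable functions is essentially sound. But the argument stops exactly where the theorem lives: you never establish that $|\A-\C|$ is unbounded, and you say so yourself (``the hard part is exactly this last lower bound'', ``I expect the argument to require a diagonalisation''). Since $\C$ itself satisfies the first four properties, the separation is the entire content of the statement; as written this is a gap, not a detail to be routinely filled.

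Moreover, the separation is not merely unproven but genuinely endangered by a design choice: you charge the machine index $e$ only $2|e|+1$ bits. The paper's machine $V$ deliberately encodes indices in \emph{unary} ($\tau=1^{p_1}000\cdots 1^{p_k}01$ costs $p_1+\cdots+p_k+O(k)$ bits, whereas its plain complexity is only about $2\log p_1+\cdots+2\log p_k+O(k)$), and this exponential overpricing is the engine of the whole separation: assuming $\C_V-\C$ bounded, comparing $|\widehat{x}|=\sum_i p_i+O(k)+|y|$ with $\C(x)\le \sum_i 2\log p_i+O(k)+|y|+O(1)$ forces the $p_i$ and $k$ to be bounded, hence only finitely many codes $\tau$ occur in shortest descriptions; the non-total ones among them contribute only finitely many strings, so eventually $\A$ is realized by a fixed finite set of total machines and is therefore computable --- contradicting the noncomputability of nontrivial lower bounds on $\C$. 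With your logarithmic cost, the $\A$-price and the $\C$-price of a description $(e,y)$ are both about $2|e|+|y|$, so assuming $\A\le\C+O(1)$ does \emph{not} force the set of indices in play to be finite, and the ``finitely many total machines, hence $\A$ computable'' endgame is unavailable: infinitely many non-total proper machines, each with a finite but arbitrarily long initial-segment domain, could in principle supply cheap descriptions, and as you yourself note a counting argument cannot rule this out. To repair the proof along the paper's lines, make the index expensive (unary, or any encoding whose length exceeds the plain complexity of the index by an unbounded amount); the three-step endgame then goes through verbatim.
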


\begin{proof}
In~\cite{GameInterpretation}, the authors define a notion of total conditional complexity $\overline{\C} (x | y )  $ as the smallest length of a program for a total function~$f$ code such that $f(y) =x$. Of course, $\overline{\C}$ is stable over all total computable functions (i.e. $\overline{\C} (f(x) | y )  \leq  \overline{\C} (x | y )+c_f$ for all total computable functions~$f$) and the authors show that $\overline{\C} $ significantly differs from the plain conditional complexity. However, the function $\overline{\C}$ is not quite suitable for our purposes, for two reasons. First, it is not upper semi-computable and second, its  non-conditional version $\overline{\C} (x|\lambda )$ is equal to $\C$ up to a constant.

In order to construct our counter-example, we first define a way to encode compositions of partial computable functions by a set of strings having the prefix-free property. This encoding is not at all optimal, which is precisely what will make our proof work. We define: $$P = \{ 1^{p_1}0001^{p_2}000\dots 1^{p_k}01 | ~~ \forall k, ~ p_k>0 \}.$$ Notice that $P$ is a prefix-free set. For $\tau= 1^{p_1}0001^{p_2}000\dots 1^{p_k}01 \in P$ we now denote by $ \varphi_\tau$ the function $\varphi_\tau= \psi_{p_1} \circ \psi_{p_2 } \circ \dots \circ \psi_{p_i } $ (recall that~$(\psi_i)$ is a standard enumeration of partial computable functions). We now define a function~$V$, as follows. For all $x \in \fs $ and $\tau \in P$, set

$$ V( \tau x ) =\begin{cases}  \varphi_\tau (x)  & \text { if for all } y \text{ such that }   |y| \leq |x| \text{ we have } \varphi_\tau (y) \downarrow \\
\uparrow & \text{ otherwise }
\end{cases} $$
$P$ is prefix-free, so $V$ is defined without ambiguity and clearly~$V$ is a (partial) computable function. If $\varphi_\tau$ is not a total function, there are only a finite number of strings~$x$ such that $V( \tau x ) \downarrow$. We shall prove that $\A=\C_V$ satisfies the conditions of the theorem. First, $\C_V$ is upper semicomputable and satisfies the counting condition, as it is just the Kolmogorov complexity function associated to the machine~$V$. Moreover, let~$i$ be an index for the identity function (i.e.\ $\psi_i=id$). By definition of $V$, one has $V(1^i01x)=x$, hence $\A(x) \leq |x|+(i+2)$. To see that $\A$ is stable over all total computable functions, let~$f$ be a total computable function and let~$e$ be an index for~$f$. Now, for any string~$x$, let $\tau y$ be such the shortest description of~$x$ for $V$ with $\tau \in P$. By definition of~$V$, this means that $\varphi_\tau(z) \downarrow$ for all $|z| \leq |y|$. And since~$f=\psi_e$ is total, we also know that $\psi_e \circ \varphi_\tau (z) \downarrow$ for all $|z| \leq |y|$. Therefore $\sigma=1^e000\tau$ is a description of~$x$ for~$V$. We have proven that $\C_V(f(x))\leq \C_V(x) + e +3$ for all~$x$. 

It remains to prove that $\C_V$ differs from $\C$, i.e.\ that $\C_V-\C$ takes arbitrarily large values. We prove this by contradiction: Suppose that $|\A (x) -\C (x) |$ is bounded by a constant. For $x \in \fs $, we denote by $\widehat{x} $ the smallest description of~$x$ for~$V$ (by definition this means that $\C_V(x)=|\widehat{x}|$).

Let~$x$ be a string. Let us first write $$\widehat{x} = 1^{p_1}000 1^{p_2}0000\dots 1^{p_k} 01 y.$$ It is easy to see that $$\C (\widehat{x}) \leq  2 \log (p_1) + 2 \log (p_2)+ \dots + 2 \log (p_k) +  2k +|y| +O(1), $$ and since $x$ can be computed from~$\widehat{x}$, this implies a fortiori:  $$\C (x) \leq  2 \log (p_1) + 2 \log (p_2)+ \dots + 2 \log (p_k) +  2k +|y| +O(1). $$
Moreover, by definition of~$V$, $$\C_V(x)=| \widehat{x} | = p_1+ p_2 + \dots + p_k + 3(k-1) +2 + |y| .$$ 
Thus, since we have assumed that $\C_V(x) - \C (x)$ bounded, this shows two things:
\begin{itemize}
\item[\textbullet] the $(p_i)$ appearing in the $\widehat{x}$'s are bounded, and 
\item[\textbullet] the number of $p_i$'s used in each $\widehat{x}$ is bounded. 
\end{itemize}

Formally, we have proven that $\{\tau \in P\, |\,  \exists x \in \fs \text{ such that }  \widehat{x} = \tau y  \}$ is a finite set and that for each $\tau $ in this set, either $\varphi_\tau$ is a total function or for $y $ large enough, $\tau y$ is not in the domain of $V$. Thus, this $\tau$ appears only in a finite number of $\widehat{x}$. 

Finally for $|\widehat{x}|$ large enough (and hence for $|x| $ large enough because ${\{ x | \A (x) \leq n \}}$ is finite for all~$n$), $\widehat{x} =\tau x $ with $\tau \in P$,  and $\varphi_\tau $ is a total computable function. So $$Q= \{ \tau \in P |  \exists^\infty x \in \fs \text{ such that }  \widehat{x} = \tau y  \} . $$ is a finite set of codes of total functions and thus there is only a finite number of $\tau \in P $ in the prefixes of $\widehat{x}$'s. 

Therefore, for $x$ large enough, $\widehat{x}$ is of the form $\tau y $ with $\tau \in Q$ and hence: $$\A (x) = \min \{|\tau y | ~ | ~ \tau \in Q \text{ and } \varphi_{\tau}  (y)=x  \} .$$ Since~$Q$ is finite and all $(\varphi_{\tau})_{\tau \in Q}$ are total, this makes $\A$~computable, contradicting $\A =\C + O(1)$ because no non-trivial lower-bound of $\C $ is computable.

\end{proof}\qed

\section{An axiomatic system for prefix complexity}\label{sectionpourlaprefix}

As we have seen in the last section, there exists a minimal  set of simple properties that characterize plain complexity.               
One may ask whether it is possible to obtain a similar characterization of prefix-free complexity~{\K}.

It is natural to keep the hypotheses \ref{hypshen1} and \ref{hypshen2}, but the other two hypotheses need to be adapted. Indeed, hypothesis~\ref{hypshen3} fails to hold for~$\K$ (i.e.\ $ \K (x)\nleq |x|  +O(1) $), and the sharpest classical upper bound is $\K (x) \leq |x| + \K ( |x|) +O(1)$ (see~\cite{Downey_2010}). 

Accordingly, the hypothesis~\ref{hypshen4} (i.e. $|\{ x | \K (x) \leq n \}|=O(2^n)$) is too weak. The analogue of that counting argument for $\K$ is the classical $$\left| \left\{ x | |x|=n~\text{and}~ \K (x) \leq n +\K(n)-k\right\}\right|=O(2^{n-k}).$$ Another property of~$\K$ that is very often used is $\sum_{x\in \fs} 2^{-\K (x)} < \infty $ (in fact, any upper semi-computable function~$\A$ satisfying $\sum_{x\in \fs} 2^{-\A (x)} < \infty $ is such that $\K \leq \A+O(1)$). Perhaps surprisingly, this set of properties alone is not enough to characterize~$\K $.

\begin{theorem}\label{pourK}
There exists a function $\A$ satisfying the following:

\begin{enumerate}

\item \label{contre_example pour K 1} $\A$ is upper semi-computable.

\item \label{contre_example pour K 2} For every partial computable function $f$ from $\fs $ to $\fs $ there exists a constant $c_f$ such that for each $\A (f(x))\leq \A (x) +c_f $ for each $x \in \fs$.

\item \label{contre_example pour K 3} $\sum_{x\in \fs} 2^{-\A (x)} < \infty $.

\item \label{contre_example pour K 4} $\A (x) \leq |x| + \A ( |x|) +O(1)$ for each $x\in \fs $.

\item \label{contre_example pour K 5} $\left| \left\{ x \in 2^n \left|~\A (x) \leq |n| +\A ( n) - b \right. \right\} \right| \leq O (2^{n-b})$.

\item  $|\A -\K |$ is not bounded. 

\end{enumerate}

\end{theorem}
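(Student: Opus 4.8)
The plan is to take the counterexample in the separable form $\A = \K + g$, where $g\colon\N\to\N$ is an unbounded, non-decreasing, upper semicomputable function that grows so slowly that it is \emph{stable under every partial computable map}: for each index $e$ there is a constant $c_e$ with $g(\psi_e(x))\le g(x)+c_e$ for every $x\in\mathrm{dom}(\psi_e)$. The bulk of the work is to produce such a $g$; once it is available, the six required properties follow almost mechanically, so I would dispatch them first.

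For the verifications: since $g\ge 0$, property~\ref{contre_example pour K 3} is immediate from Kraft's inequality, as $\sum_x 2^{-\A(x)}\le\sum_x 2^{-\K(x)}\le 1$. Property~\ref{contre_example pour K 1} holds because $\A$ is the sum of the two upper semicomputable functions $\K$ and $g$. Property~\ref{contre_example pour K 2} follows by adding the stability of $\K$ (hypothesis~\ref{hypshen2} of Theorem~\ref{shenC}, which $\K$ also satisfies) to the stability of $g$. For property~\ref{contre_example pour K 4} I would use $\K(x)\le |x|+\K(|x|)+O(1)$ and reduce to showing $g(x)\le g(|x|)+O(1)$; this comes from applying $g$-stability to the total computable map $m\mapsto 2^{m}-1$ (the largest integer of length $m$) together with monotonicity, since every $x$ with $|x|=m$ satisfies $x\le 2^{m}-1$. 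For the counting property~\ref{contre_example pour K 5}, note that a string $x$ of length $n$ satisfies $x\ge n$ as an integer, hence $g(x)\ge g(n)$; therefore $\{x: |x|=n,\ \A(x)\le n+\A(n)-b\}\subseteq\{x:|x|=n,\ \K(x)\le n+\K(n)-b\}$, and the classical counting bound for $\K$ yields the desired $O(2^{\,n-b})$. Finally $\A-\K=g$ is unbounded, which is the last clause.

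The heart of the proof is building $g$, and the difficulty is genuine: since partial computable functions realize arbitrarily fast growth rates, an \emph{unbounded} $g$ with $g\circ\psi_e\le g+O(1)$ must grow more slowly than the inverse of every computable function, and yet it must stay semicomputable. I would resolve this tension by describing $g$ through its level sets $\{x: g(x)<k\}=[0,t_k)$ and allowing the thresholds $t_k$ to be merely \emph{lower} semicomputable (so that $g$ is upper semicomputable); crucially, this lets the construction wait for halting computations rather than imposing a time bound. Concretely I would set $t_0=1$ and, in the limit, $t_{k+1}=\max\bigl(t_k+1,\ 1+\max\{\psi_e(x):e\le k,\ x<t_k,\ \psi_e(x)\!\downarrow\}\bigr)$, approximated from below by running all relevant computations for more and more steps. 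Each $t_k$ is then finite (the maximum ranges over finitely many pairs, each contributing a finite halting value), strictly increasing (whence $g$ is unbounded), and lower semicomputable.

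Stability of $g$ is where the design pays off, and I expect it to be the step most worth checking carefully. Suppose $g(x)=k$, i.e.\ $t_k\le x<t_{k+1}$, and $\psi_e(x)\!\downarrow$. Put $j=\max(k+1,e)$. Then $e\le j$ and $x<t_{k+1}\le t_j$, so the pair $(e,x)$ is eligible in the definition of $t_{j+1}$; hence $t_{j+1}\ge 1+\psi_e(x)>\psi_e(x)$, giving $g(\psi_e(x))\le j\le g(x)+e+1$. Thus $c_e=e+1$ works uniformly in $x$, which is exactly the stability invoked above. The only subtle point is that this uniform constant would be destroyed if the thresholds were forced to be computable: a time bound would make $c_e$ depend on the halting time of $\psi_e(x)$ and hence grow with $x$. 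The passage to lower-semicomputable thresholds is precisely what the ``slowing operator'' of the introduction buys us.
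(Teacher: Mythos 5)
Your proof is correct, and it takes a genuinely different route from the paper's. You both reduce the theorem to the same core task --- producing an unbounded, non-decreasing, upper semicomputable $g$ with $g(f(x))\le g(x)+c_f$ for every partial computable $f$, and then setting $\A=\K+g$; the six verifications you give from that point on match the paper's and are all sound (including the reduction of condition~\ref{contre_example pour K 4} to $g(x)\le g(|x|)+O(1)$ via the map $m\mapsto 2^m-1$, and the containment argument for condition~\ref{contre_example pour K 5}). Where you diverge is in the construction of $g$. The paper takes $g=\alpha^*$, where $\alpha(n)=\min\{\K(i)\mid i>n\}$ is Solovay's function; it first proves the weaker stability $\alpha(\alpha(f(n)))\le\alpha(n)+c_f$ (Lemma~\ref{fun_widjet}) and then repairs it into genuine stability by passing through the star operator (Lemma~\ref{star_operator_prop}). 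You instead build $g$ directly from its level sets, with lower semicomputable thresholds $t_{k+1}$ defined as a max over halting values $\psi_e(x)$ for $e\le k$, $x<t_k$ --- essentially the inverse of a busy-beaver-type function --- and your stability argument (eligibility of the pair $(e,x)$ at stage $j=\max(k+1,e)$, giving $c_e=e+1$) is clean and correct. Your construction is more elementary and self-contained: it needs neither the properties of $\alpha$ nor the star operator. What the paper's choice buys is that $\alpha^*$ is built out of $\K$ itself and relativizes immediately to the conditional setting ($\alpha(\cdot\,|\,y)$ and $\alpha^*(\cdot\,|\,y)$), which the paper exploits after the theorem to defeat the conditional axiomatization attempt as well; the star operator is also presented as a reusable slowing-down tool. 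The only blemish in your write-up is a harmless off-by-one in the level-set convention ($\{x\mid g(x)<k\}=[0,t_k)$ with $t_0=1$ would force $g(0)<0$); the intended reading $g(x)=k\iff t_k\le x<t_{k+1}$ is clear from your stability argument and fixes this.
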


\rem Since hypothesis \ref{contre_example pour K 3} guarantees the inequality $ \K \leq \A +O(1)$, it would be sufficient, in order to obtain a full characterization of~$\K$, to add the property: ``For every $f$ partial computable prefix-free function there exists $c_f$ such that $\A (f(x)) \leq |x| +c_f$". Indeed, for all $x$ if we denote by $x^*$ a shortest string such that $\mathbb{U}'(x^*)=x$ then $\A (x) = \A ( \mathbb{U}'(x*)) \leq |x^*| +c_{\mathbb{U}'} = \K (x) +c_{\mathbb{U}'}$. However, such a system would not be very satisfactory because it uses the prefix-freeness of functions and thus is mostly a rewording of the definition of~$\K$.

\begin{proof}

We will construct \A by taking $\A= \K + \beta $ with $\beta $ an unbounded function with certain nice properties. The function $\beta $ will be upper semicomputable, non-decreasing, unbounded, such that $$\beta (x) = \beta (|x|) +O(1),$$ and such that for~$f$ partial computable function, there is~$c_f$ such that \begin{equation}\label{robustesse} \beta ( f(n)) \leq \beta(n) +c_f.  \end{equation}  Simple considerations show that $\beta $ has to have a very low growth speed. First let us define Solovay's $\alpha $-function:  
\begin{definition}
The Solovay's $\alpha $-function is defined by:
$$\alpha (n) = \min \{\K (i)| i>n \}. $$

We call order a total, non-decreasing and unbounded (not necessarily computable) function $f: \N \rightarrow \N $. 

\end{definition}

Equivalently $\alpha (n)$ is the length of the shortest string $\tau $ such that $\mathbb{U}'( \tau ) > n$ since $\mathbb{U}'$ is the optimal optimal prefix-free machine chosen to define $\K$. 

$\alpha $ is an order with a very low rate of growth, and actually one can show that it grows more slowly than any computable order. 

\begin{lemma}\label{lemme ordre}
For each $h$ computable order:

\begin{itemize}

\item[(i)] for all $n$, $\alpha (h(n)) = \alpha (n) + O(1)$

\item[(ii)] for all $n$, $\alpha (n) \leq h(n) +O(1)$

\end{itemize} 
\end{lemma}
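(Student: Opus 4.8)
The plan is to exploit two standard features of prefix-free complexity together with a single combinatorial device: a computable ``generalized inverse'' of the order $h$. The two features are (a) stability, i.e.\ $\K(g(x)) \le \K(x) + O(1)$ for every \emph{total} computable $g$, and (b) the crude bound $\K(m) \le 2|m| + O(1) \le m + O(1)$. The device is, for a value $v$, the quantity $b(v) = \max\{ j \mid h(j) \le v \}$. Because $h$ is non-decreasing and unbounded, the set $\{ j \mid h(j) \le v\}$ is finite and (as soon as $v \ge h(0)$) non-empty, so $b(v)$ is well defined, and it is computable: one evaluates $h(0), h(1), \dots$ until the value first exceeds $v$, which must happen. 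The key properties are that $b(v)+1$ is the least $j$ with $h(j) > v$, so $h(b(v)+1) > v$, while dually $h(n) \le v$ forces $n \le b(v)$.

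For part (ii) I would set $m = h(n)$ and consider the number $b(m)+1$. Since $h(n) \le m$ we get $n \le b(m)$, hence $b(m)+1 > n$, so $\alpha(n) \le \K(b(m)+1)$. As $b(\cdot)+1$ is total computable, stability gives $\K(b(m)+1) \le \K(m) + O(1)$, and then $\K(m) \le m + O(1) = h(n) + O(1)$ finishes the bound $\alpha(n) \le h(n) + O(1)$. (In fact this same argument yields the sharper $\alpha(n) \le \K(h(n)) + O(1)$.)

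For part (i) I would prove the two inequalities separately, each by transporting the optimal witness through a total computable map. For $\alpha(h(n)) \le \alpha(n) + O(1)$, let $j^\ast > n$ realize $\alpha(n) = \K(j^\ast)$; since $h$ is non-decreasing, $h(j^\ast) \ge h(n)$, so $h(j^\ast)+1 > h(n)$, whence $\alpha(h(n)) \le \K(h(j^\ast)+1) \le \K(j^\ast) + O(1) = \alpha(n) + O(1)$, using stability for the total map $x \mapsto h(x)+1$. For the reverse inequality $\alpha(n) \le \alpha(h(n)) + O(1)$, let $i^\ast > h(n)$ realize $\alpha(h(n)) = \K(i^\ast)$; then $i^\ast - 1 \ge h(n)$, so $n \le b(i^\ast-1)$ and the number $b(i^\ast-1)+1$ exceeds $n$. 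Applying stability to the total map $v \mapsto b(v-1)+1$ gives $\alpha(n) \le \K(b(i^\ast-1)+1) \le \K(i^\ast) + O(1) = \alpha(h(n)) + O(1)$. Combining the two yields $\alpha(h(n)) = \alpha(n) + O(1)$.

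The step I expect to require the most care is the handling of the inverse $b$, precisely because $h$ is only assumed non-decreasing and unbounded, not strictly increasing and not dominating the identity. One therefore cannot simply compose $h$ with a witness and hope to land strictly above the target: when $h$ is locally constant, $h(j^\ast) = h(n)$ can occur, which is why each direction inserts a harmless ``$+1$'' to restore strictness. I would also verify the boundary cases that keep $b$ well defined and its defining search terminating (non-emptiness needs $v \ge h(0)$, which holds because every witness exceeds $h(n) \ge h(0)$; finiteness and termination follow from unboundedness), and confirm that each auxiliary function to which stability is applied is genuinely total, so that the additive constants depend only on $h$ and the fixed machine and are uniform in $n$.
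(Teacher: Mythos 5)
Your proof is correct and follows essentially the same route as the paper: stability of $\K$ under total computable maps, transporting the witness $j>n$ realizing $\alpha(n)$, and the computable inverse order (the paper's $\widehat{h}(v)=\max\{i \mid h(i)\leq v\}$ is exactly your $b$). The only differences are cosmetic --- you prove (ii) directly from $b$ where the paper derives it from the second half of (i), and your explicit ``$+1$'' adjustments patch a small strictness issue (when $h$ is locally constant) that the paper's chain $\K(h(j)) \geq \alpha(h(j)) \geq \alpha(h(n))$ glosses over.
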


\begin{proof}
To prove this lemma, we need the following list of trivial facts. 
\begin{itemize}
\item[\textbullet] By the definition of $\alpha$, there exists $j>n$ such that $\alpha (n) = \K (j)$.

\item[\textbullet] There exists $c_{h} $ such that for all $n$, $\K (h(n)) \leq \K (n) +c_{h}$.

\item[\textbullet] $\K (n) \geq \alpha (n)$ for all~$n$.

\item[\textbullet] Since~$h$ and~$\alpha $ are order functions, $h(j) \geq h(n)$ and $\alpha (h(j)) \geq \alpha (h(n))$.
\end{itemize}

Now, one can apply these facts in order to get:
$$\alpha (n) = \K (j) \geq \K (h(j))-c_f \geq \alpha (h(j)) - c_{f} \geq \alpha (h(n)) - c_{f}. $$

To prove $\alpha (n) \leq \alpha (h(n)) +O(1)$, it suffices to consider an inverse order ${\widehat{h} }$ of the order function $h$ defined by: ${\widehat{h}(n) = \max \{ i |h(i) \leq n \} }$. Since ${\widehat{h}}$ is a computable order we have: $$\alpha (n) \leq \alpha (\widehat{f}(h(n))) \leq \alpha (h(n)) + c. $$

To show that $\alpha (n) \leq h(n) +O(1)$, notice that $\K (n ) \leq  n +O(1)$ and so there exists $c$ such that $\alpha (n) \leq n +c$. Finally, by the previous point, we have: $$ \alpha (n) \leq \alpha \left( h(n)\right) + c_h \leq h(n) + d.  $$\qed

\end{proof}

We can show that $\alpha $ satisfies \ref{robustesse} for each total computable function, but there exists some partial computable functions such that $\alpha $ does not satisfy \ref{robustesse}. In the same way we can show that $\K + \alpha $ does not satisfy  condition \ref{contre_example pour K 2} in the statement of the theorem. However, we have a weaker version for partial functions:

\begin{lemma}\label{fun_widjet}
For each  partial computable function $f:\N \rightarrow \N $ there exists $c_f$ such that for all $n$ $$\alpha (\alpha (f(n))) \leq \alpha (n) +c_f .$$

\end{lemma}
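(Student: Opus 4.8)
The plan is to bound the inner application $\alpha(f(n))$ by the prefix complexity $\K(n)$, and then to exploit the fact that the outer application of $\alpha$ collapses any quantity that is at most $n$ down to $\alpha(n)$. The key point is precisely that a single $\alpha$ cannot absorb a partial computable function (the best bound one obtains is in terms of $\K(n)$, which is much larger than $\alpha(n)$ in general), whereas a second application of $\alpha$ repairs this. This also explains, on the nose, why the statement is phrased with a double $\alpha$.

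First I would establish the single-$\alpha$ bound $\alpha(f(n)) \le \K(n) + c_f$ for every $n$ in the domain of $f$. Consider the partial computable function $g(n) = f(n)+1$. Since $n \mapsto n+1$ is computable, the definition of $\alpha$ gives $\alpha(f(n)) = \min\{\K(i) \mid i > f(n)\} \le \K(f(n)+1) = \K(g(n))$. As $g$ is partial computable and $\K$ satisfies the basic stability property (the analogue of hypothesis~\ref{hypshen2}, which $\K$ does satisfy), we have $\K(g(n)) \le \K(n) + c_g$, and hence $\alpha(f(n)) \le \K(n) + c_f$ for a suitable constant.

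Next I would apply $\alpha$ a second time, using that $\alpha$ is non-decreasing. From the previous bound, $\alpha(\alpha(f(n))) \le \alpha(\K(n) + c_f)$. Now I would invoke the classical upper bound $\K(n) = O(\log n)$; in particular $\K(n) + c_f \le n$ for all but finitely many $n$. For those $n$, monotonicity of $\alpha$ yields $\alpha(\K(n)+c_f) \le \alpha(n)$, hence $\alpha(\alpha(f(n))) \le \alpha(n)$. The finitely many remaining $n$ below the threshold (intersected with the domain of $f$) contribute only to the additive constant, so $\alpha(\alpha(f(n))) \le \alpha(n) + c_f$ for all $n$ in the domain of $f$, as required.

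The only genuinely delicate point is the first step and understanding why it forces a double $\alpha$. The bound $\alpha(f(n)) \le \K(n) + c_f$ is essentially the best one can hope for with partial $f$: the monotone computable majorant of $f$ that saves the total case via Lemma~\ref{lemme ordre} is simply unavailable when $f$ is partial, and since $\K(n) - \alpha(n)$ is unbounded (Solovay), this cannot be improved to $\alpha(f(n)) \le \alpha(n) + O(1)$. The role of the outer $\alpha$ is exactly to compress the over-large estimate $\K(n)$, which lies below $n$, back down to $\alpha(n)$ by monotonicity, which is what makes the composition behave well.
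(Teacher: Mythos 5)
Your proposal is correct and follows essentially the same route as the paper: bound the inner term via $\alpha(f(n)) \leq \K(f(n)) \leq \K(n) + c_f \leq n + O(1)$ using the stability of $\K$ under partial computable functions, then apply the monotone $\alpha$ once more to collapse this back to $\alpha(n) + O(1)$. Your version is if anything slightly more careful than the paper's (handling the strict inequality $i > f(n)$ via $g(n) = f(n)+1$, and absorbing the finitely many small $n$ into the constant), but the underlying argument is identical.
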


\begin{proof}  This follows from the following simple fact. For each $f$ there exists $c_f$ such that:
$$ \alpha (f(n)) \leq \K (f(n)) \leq \K (n) +c_f  \leq n + c_f . $$
 Since $\alpha $ is a sub-linear  order:
$$ \alpha (\alpha (f(n))) \leq \alpha (n + c_f) \leq \alpha (n) +c_f +O(1).$$

\end{proof}\qed

As  stated above, the partial computable functions can  increase too quickly to satisfy the second condition of the theorem. For this reason we introduce a general operator to slow down  sub-linear and upper semi-computable orders:

\begin{definition}[Star-operator]
Let~$f$ be a sub-linear (i.e.\ $f(n)=o (n)$) order function. If we set $$p_f =\max \{n | f(n)\geq n \}  $$ which is well-defined by sub-linearity of~$f$, then, $f^*$ is defined by: $$f^* (n) = \min \{k |f^{(k)}( n)\leq p_f\}.$$

\end{definition}

This operator is a generalization of the so-called $\log^* $, which is precisely the function one gets by taking $f=\log$ in our definition of $f^*$. \\

\rem A simpler definition could be $f^* (n) = \min \{k |f^{(k)}( n)=f^{(k+1)}( n) \}$ but for small values of $n$ and for some function $f$ (for functions with more than one fixed point, for example) this definition is not exactly the same and is in fact less natural.  This star operator will suit our purposes because it possesses some nice properties.

\begin{lemma}\label{star_operator_prop}
Let $f$ a sub-linear order function. The following properties hold: 

\begin{enumerate}

\item \label{prop_star_op 1} $f^*$ is a sub-linear order function.

\item \label{prop_star_op 2} If $f$ is a computable function then so is $f^*$.

\item \label{prop_star_op 3} If $f$ is a upper semi-computable function then so is $f^*$.

\item \label{prop_star_op 4}  $0\leq f^* (n) -  f^*(f^{(i)} (n))  \leq i  $.


\end{enumerate}
\end{lemma}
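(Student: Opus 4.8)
The plan is to first isolate a structural fact about $p_f$ that underlies all four claims: $p_f$ is a fixed point of $f$, and $f$ maps $\{0,\dots,p_f\}$ into itself while dropping strictly below the diagonal everywhere above $p_f$. Indeed, were $f(p_f)=q>p_f$, monotonicity would give $f(q)\geq f(p_f)=q$, contradicting the maximality of $p_f$; hence $f(p_f)=p_f$, and by monotonicity $f(n)\leq p_f$ for every $n\leq p_f$. For $n>p_f$ the definition of $p_f$ forces $f(n)<n$, so the iterates $f^{(k)}(n)$ strictly decrease (while remaining in $\N$) until they fall to $\leq p_f$. This shows $f^*(n)$ is finite and well-defined for every $n$, with $f^*(n)=0$ precisely when $n\leq p_f$.

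For claim~\ref{prop_star_op 4} I would establish the one-step behaviour: if $m>p_f$ then $f^*(f(m))=f^*(m)-1$, while if $m\leq p_f$ then $f^*(f(m))=f^*(m)=0$. The first is immediate, since $f^{(k)}(m)\leq p_f$ holds iff $f^{(k-1)}(f(m))\leq p_f$; the second is the closure fact above. In both cases $f^*(m)-1\leq f^*(f(m))\leq f^*(m)$, and iterating this inequality $i$ times yields $f^*(n)-i\leq f^*(f^{(i)}(n))\leq f^*(n)$, which is exactly the desired bound.

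For claim~\ref{prop_star_op 1}, monotonicity of $f^*$ follows because $n\leq m$ implies $f^{(k)}(n)\leq f^{(k)}(m)$ for every $k$ (induction on $k$, using that $f$ is non-decreasing), so any $k$ witnessing $f^*(m)\leq k$ also witnesses $f^*(n)\leq k$. Unboundedness follows because each iterate $f^{(k)}$ is itself an unbounded order, so for every $k$ there is an $n$ with $f^{(k)}(n)>p_f$, forcing $f^*(n)>k$. Sub-linearity is the quantitative refinement: since $f(n)=o(n)$ there is a threshold $N$ with $f(n)\leq n/2$ for all $n\geq N$, so from any $n$ the iterates halve until they fall below $N$ and then decrease by at least $1$ per step, giving $f^*(n)\leq \log_2 n + N = o(n)$.

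The main obstacle is the preservation of effectivity, and in particular claim~\ref{prop_star_op 3}. For claim~\ref{prop_star_op 2} one treats the constant $p_f$ as given and computes $f^*(n)$ by iterating the computable $f$ from $n$, counting steps until the value is $\leq p_f$; this halts by well-definedness. For claim~\ref{prop_star_op 3} the difficulty is that an upper semi-computable $f$ cannot be evaluated exactly, only confirmed from above. The key trick is to semi-decide $f^*(n)\leq k$ by dovetailing over all finite chains $n=v_0,v_1,\dots,v_k$ with $v_k\leq p_f$ and confirming, through the c.e.\ predicate for $f$, that $f(v_j)\leq v_{j+1}$ for each $j$; monotonicity of $f$ then propagates these upper bounds to yield $f^{(k)}(n)\leq v_k\leq p_f$. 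Conversely, whenever $f^*(n)\leq k$ the exact chain $v_j=f^{(j)}(n)$ witnesses it, so $\{(n,k)\mid f^*(n)\leq k\}$ is c.e.\ and $f^*$ is upper semi-computable.
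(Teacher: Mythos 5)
Your proof is correct and follows essentially the same route as the paper: monotonicity of the iterates for claim~1, iterating $f$ and counting steps until the value drops to $p_f$ for claims~2 and~3, and the one-step identity $f^*(m)=f^*(f(m))+1$ (for $m>p_f$) for claim~4. You supply more detail than the paper does — notably the fixed-point structure $f(p_f)=p_f$, the quantitative $O(\log n)$ bound for sub-linearity, and the explicit chain argument showing $\{(n,k)\mid f^*(n)\leq k\}$ is c.e., which makes rigorous the paper's brief ``compute in parallel the approximations'' remark for claim~3 — but the underlying ideas coincide.
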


\begin{proof}

(\ref{prop_star_op 1}) The last claim will ensure sub-linearity. To see that $f^*$ is non-decreasing, if $x \leq y$ then  $f^{(i)}(x) \leq f^{(i)}(y)$ for all $i$ because $f$ is non-decreasing. Finally, $f^*$ is unbounded, for if $f^*$ had a finite limit $d$, then $f^{(d)}$ would be bounded. But this is not possible because $\widehat{f} $ tends to infinity. 

(\ref{prop_star_op 2}) If $f$ is computable then to determine  $f^* (n) $ we can compute the sequence $f^{(1)}(n) , f^{(2)}(n) , \dots , f^{(k)}(n), \dots    $ until we find the first $j$ such that $f^{(j)} (n)\leq p_f$ and we return $j$. 

(\ref{prop_star_op 3}) If $f$ is upper semi-computable then we compute in parallel the approximations of $f^{(k)} (n)$ for all $k$, and we return the least~$k$ such that $f^{(k)} (n) \leq p_f$.

(\ref{prop_star_op 4})  If $f^* (n) \leq i $ then $f^*(f^{(i)}) (n)=0$ because necessarily, $  f^{(i)} (n) \leq p_f $. So $f^* (n) \leq f^*(f^{(i)} (n)) + i$.

 If  $f^* (n) > i $ then $f^* (n) = f^*(f^{(i)}) (n)+i$ by definition of the star-operator. In both cases $f^* (n) \geq f^*(f^{(i)} (n))$.


\end{proof}

We shall use Solovay's $\alpha$ function transformed by the star-operator. We will show that the function $\A (x) = \K (x) + \alpha^* (x) $ has all the necessary properties to prove the theorem.

By Lemma \ref{star_operator_prop} the function $\alpha^* $ is upper semicomputable, and thus ${\K+\alpha^*}$ is as well.

By Lemma~\ref{fun_widjet} we have that for each  partial computable function $f:\N \rightarrow \N $ there exists $c_f$ such that for all $n$ $$\alpha (\alpha (f(n))) \leq \alpha (n) + c_f$$ and by Lemma~\ref{star_operator_prop} (claim 4), if we apply $\alpha^* $ on each term of the previous inequality we have (since $\alpha^*$ is sub-linear) $$\alpha^* (f(n)) - 2 \leq \alpha^* ( \alpha ( n)+c_f)  \leq \alpha^* (n) +O(1) .$$ This proves the second condition of the theorem.

 
 The property \ref{contre_example pour K 3} is clear because we have $\sum_{x\in \fs} 2^{-\K (x)} < \infty $ and $\A (x) \geq \K (x) $ (because $\alpha^* (x) \geq 0$). 
 
 Finally, since $| .| $ is a computable order function,  by Lemma \ref{lemme ordre} we have $\alpha (x) =  \alpha (|x|)+ O(1) $. By condition \ref{prop_star_op 4} of Lemma \ref{star_operator_prop}, the equality $$\alpha^* (x) =  \alpha^* (|x|)+ O(1) $$ holds. This equality shows that $\A $ satisfies hypotheses \ref{contre_example pour K 4} and \ref{contre_example pour K 5}.

\qed
\end{proof}

It is interesting to notice that the counter-example we produced also invalidates several similar attempts for an axiomatization. 
For example, one could add the condition:  $$\K(xy) \leq \K (x,y) \leq \K(x) + \K(y) +O(1).$$ But $\K + \alpha^* $ also satisfies this. One could then ask whether the more precise inequality $\K (x,y) \leq K(x)+K(y|x)+O(1)$ could help characterizing conditional prefix-free Kolmogorov complexity, but then again, defining $\alpha (x | y) $ by $$\alpha (n | m) = \min \{\K (i |m)| i>n \} $$ and then $\alpha^* ( . | y)$ for each $y$, we get a counter-example by taking~$A(.|.)=K(.|.)+\alpha^*(.|.)$. 

This, together with Theorem \ref{pourK}, shows that the situation is more subtle in the prefix-free complexity case  than in the plain complexity case. Finding a natural characteristic set of properties for \K is left as an open question. 

\section{Acknowledgements}

I would like to express my gratitude to  Laurent Bienvenu without whom this paper would never had
existed. Thanks also to Serge Grigorieff for our numerous discussions during which he helped me progress on this work. 
Finally, thanks to the Chris Porter and three anonymous reviewers for their help in preparing the final version of this paper.

\bibliographystyle{alpha}

\newpage

\section*{Appendix}

\subsection{Theorem \ref{shenC}'s full proof}

\begin{theorem}\cite{shenpapier82} \label{shenC}
Let \A be a function of $\fs \rightarrow \N $. If \A   verifies:
\begin{enumerate}

\item \label{hypshen1} \A is computable from above.
\item \label{hypshen2} For every partial computable function $f: \fs \rightarrow \fs $  there exists a constant $c_f$ such that for each $\A (f(x))\leq \A (x) +c_f $ for each $x \in \fs$
\item \label{hypshen3}  $ \A (x)\leq |x |  +O(1) $ for all $x\in \fs$. 
\item \label{hypshen4} $|\{ x | \A (x) \leq n \}|=O(2^n)$

\end{enumerate}

Then $\A (x) = \C (x) +O(1)$

\end{theorem}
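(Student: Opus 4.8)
The plan is to establish the two inequalities $\A \le \C + O(1)$ and $\C \le \A + O(1)$ separately and then combine them.

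For the upper bound $\A \le \C + O(1)$ I would feed the optimal machine itself into the stability hypothesis. Write $x^*$ for a shortest $\mathbb{U}$-description of $x$, so that $\mathbb{U}(x^*) = x$ and $|x^*| = \C(x)$. Applying hypothesis~\ref{hypshen2} to the partial computable function $f = \mathbb{U}$ gives $\A(x) = \A(\mathbb{U}(x^*)) \le \A(x^*) + c_{\mathbb{U}}$, while hypothesis~\ref{hypshen3} gives $\A(x^*) \le |x^*| + O(1) = \C(x) + O(1)$; chaining these yields the bound. This direction is routine, but it is worth noting that it genuinely uses $f$ ranging over \emph{partial} computable functions, since $\mathbb{U}$ is not total --- which is exactly the phenomenon later exploited in Theorem~\ref{shenCtot}.

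For the lower bound $\C \le \A + O(1)$ I would use an enumeration-and-rank construction. By hypothesis~\ref{hypshen1} the set $S_n = \{x \mid \A(x) \le n\}$ is computably enumerable uniformly in $n$, and by hypothesis~\ref{hypshen4} there is a constant $d$ with $|S_n| \le 2^{n+d}$ for every $n$. I would build a machine $M$ that, on an input $p$ of length $m \ge d$, sets $n = m - d$, enumerates $S_n$ without repetitions, and outputs the element whose position in this enumeration is the integer coded by $p$. For any string $y$, taking $n = \A(y)$, the rank $r$ of $y$ in $S_n$ satisfies $r < |S_n| \le 2^{n+d}$, so it is coded by a string of length exactly $n + d$; this string is an $M$-description of $y$, whence $\C_M(y) \le \A(y) + d$. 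Optimality of $\mathbb{U}$ then gives $\C(y) \le \A(y) + O(1)$.

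The point requiring the most care is the bookkeeping in this second construction rather than any deep idea: the decoder must recover $n$ from the description alone, which is why the description length is pinned to exactly $n + d$ (so that $n = |p| - d$), and the enumeration of $S_n$ must be repetition-free so that ranks are well-defined and the coding is injective. One also checks that the $O(2^n)$ counting bound delivers a single constant $d$ valid uniformly in $n$, which is immediate from the meaning of the $O$-estimate. No diagonalization or approximation subtleties arise beyond this, so once $M$ is set up correctly the computation of the constant is mechanical.
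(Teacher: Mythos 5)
Your proposal is correct and follows essentially the same route as the paper's own proof: the upper bound $\A \le \C + O(1)$ by applying the stability hypothesis to $\mathbb{U}$ and the length bound to $x^*$, and the lower bound $\C \le \A + O(1)$ by the rank-in-an-enumeration coding of $S_n$ with descriptions padded to length exactly $n+d$. The bookkeeping points you flag (recovering $n$ from the description length, repetition-free enumeration, uniformity of $d$) are exactly the ones the paper's appendix handles.
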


This theorem shows that these four properties define exactly  plain Kolmogorov complexity (up to an additive constant, of course). \\

\begin{proof}
First we show $\C (x) \leq \A (x) + O(1) $. We set $S_n =\{ x | \A (x) \leq n \}$. \A is computable from above hence $ S_n $ is a uniformly computably enumerable set.     And by property \ref{hypshen4} there exists $d$ independent of~$n$) such that $| S_n | <2^{n+d-1} $. 

For each $ y $ such that $\A (y)= n$  we can describe $y$ in $S_n $ with a string $\overline{y}$ of length exactly $n+d$, this string represents the rank of $y$ in an enumeration of $S_n$ ($\overline{y}$ is this number with a padding if number does not use  $n+d$ bits). 

Now we describe an algorithm $ E $ to compute $y$ from $\overline{y} $. $E (\overline{y} )  $ is  the $\overline{y}^{\text{th}}$ element in the enumeration of $S_{|\overline{y}| - d}$. To compute $E$ we enumerate the set $\{ \langle n , x \rangle | x\in S_n \}$ and count only elements of with the form $\langle |\overline{y}|-d , x \rangle $ and output $x$ for the $\overline{y}^{\text{th}}$. So we have a machine $E$ such that for all $x$ there exists $p$ such that $E (p) = x$ and $|p|\leq \A (x) + O(1)$. By optimality of $\C$ we have $\C (x) \leq \A (x) + O(1) $.

Now we prove that $ \A (x) \leq \C(x) + O(1) $. If we apply properties \ref{hypshen2} and \ref{hypshen3} in this order we have the next inequalities (we note $x^*$ the shortest string such that $\mathbb{U}(x^* ) = x$): 
$$ \A (x) = \A (\mathbb{U}(x^*)) \leq \A (x^*) +c_U \leq |x^* | + O(1)= \C (x) +O(1).$$
\qed
\end{proof}

\subsection{Conditional complexity}\label{conditionel}

In this section we give a more general axiomatic system for conditional plain complexity.

\begin{theorem} \label{conditionelmarchebien}
Let $ \B : \fs \times \fs \rightarrow \N $. If \B satisfies:

\begin{enumerate}

\item \label{hypshen12} Uniformly in $x$, $y$, $\B (x|y) $ is upper semi-computable.
\item \label{hypshen32} For all $  x, y \in \fs$, $ \B (x|y)\leq |x |  + O(1) $. 
\item \label{hypshen42} For each $  y \in \fs$ we have $|\{ x | \B (x|y) \leq n \}|=O(2^n)$.
\item \label{hypshen52} $\B (\langle x,y \rangle |y)\leq \B (x |y ) + O(1) $.
\item \label{hypshen22} For all $y$ and for every partial computable function $f$ from $\fs $ to $\fs $ there exists a constant $c_f$ such that for each $x \in \fs$: $$\B (f( x  )|y)\leq \B (x |y ) +c_f. $$ 

\end{enumerate}

Then $|\B (x|y) - \C(x|y) |= O(1)$. 

\end{theorem}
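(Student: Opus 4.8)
The plan is to mirror the proof of Theorem~\ref{shenC}, establishing the two inequalities $\C(x|y) \leq \B(x|y) + O(1)$ and $\B(x|y) \leq \C(x|y) + O(1)$ separately, while making sure that every constant produced is uniform in the condition~$y$.

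For the lower bound $\C(x|y) \leq \B(x|y) + O(1)$ I would relativise Shen's counting argument to~$y$. Fix~$y$ and set $S_n^y = \{x \mid \B(x|y) \leq n\}$. By hypothesis~\ref{hypshen12} this family is uniformly computably enumerable in~$n$ and~$y$, and by hypothesis~\ref{hypshen42} there is a constant~$d$, independent of both~$n$ and~$y$, with $|S_n^y| < 2^{n+d}$. Hence each~$x$ with $\B(x|y)=n$ can be encoded by its rank in an enumeration of $S_n^y$, a string~$p$ of length $n+d$. A single machine~$E$ reading $\langle p, y\rangle$ recovers $n = |p|-d$, enumerates $S_n^y$ using~$y$, and outputs the element of the prescribed rank. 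Optimality of~$\mathbb{U}$ for conditional complexity (whose overhead is itself uniform in~$y$) then gives $\C(x|y) \leq |p| + O(1) = \B(x|y) + O(1)$.

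The hard part is the upper bound $\B(x|y) \leq \C(x|y) + O(1)$. The naive attempt applies the stability hypothesis~\ref{hypshen22} to the map $z \mapsto \mathbb{U}(\langle z, y\rangle)$, but this map depends on~$y$, so the constant~$c_f$ it yields also depends on~$y$ and the bound becomes useless as $y$ varies. Hypothesis~\ref{hypshen52} is exactly what circumvents this, and is the reason the full pairing condition (rather than some weaker substitute) is needed. Let~$z^*$ be a shortest string with $\mathbb{U}(\langle z^*, y\rangle) = x$, so $|z^*| = \C(x|y)$. By hypothesis~\ref{hypshen32}, $\B(z^*|y) \leq |z^*| + O(1) = \C(x|y) + O(1)$; then by hypothesis~\ref{hypshen52}, $\B(\langle z^*, y\rangle | y) \leq \B(z^*|y) + O(1) \leq \C(x|y) + O(1)$. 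Having folded~$y$ into the argument, I can now apply hypothesis~\ref{hypshen22} to the single, $y$-independent partial computable function $f(\langle z, w\rangle) = \mathbb{U}(\langle z, w\rangle)$, whose constant~$c_f$ is genuinely uniform. Since $f(\langle z^*, y\rangle) = x$, this yields $\B(x|y) \leq \B(\langle z^*, y\rangle | y) + c_f \leq \C(x|y) + O(1)$, finishing the argument.
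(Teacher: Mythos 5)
Your proposal is correct and follows essentially the same route as the paper: the first inequality is the relativised counting/rank argument, and the second is exactly the paper's chain $\B(x|y)=\B(\mathbb{U}(\langle x_y^*,y\rangle)|y)\leq \B(\langle x_y^*,y\rangle|y)+O(1)\leq \B(x_y^*|y)+O(1)\leq |x_y^*|+O(1)$, applying hypotheses \ref{hypshen22}, \ref{hypshen52} and \ref{hypshen32} in the same way (you merely write the chain bottom-up and make explicit why the constant from \ref{hypshen22} is uniform in $y$). No gaps.
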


\noindent \textbf{Proof.} \\ 
 For $\C (x|y) \leq  \B (x|y) + c $ the proof is very similar to the unconditional version (theorem \ref{shenC}). In the enumeration of $\{ \langle n , x \rangle |  \B (x|y) < n  \}$ if we know $y$, then we can find $x$ given its index in the set hence $\C (x|y) \leq n +O(1)$. 
 
For the converse. We define $x_y^*$ to be the shortest (and first in lexicographic order) string such that $\mathbb{U} (\langle y, x_y^* \rangle ) = x $ so $\C (x|y) = |x_y^*|$. And if we apply \ref{hypshen22}, \ref{hypshen52} and \ref{hypshen32} (in this order for each inequality) we have: 
 \begin{align*}
    \B (x|y)& = \B (\mathbb{U} (\langle  x_y^*,y \rangle )|y) \leq  \B (\langle  x_y^* ,y\rangle |y)+O(1) \\  & \leq  \B ( x_y^*  |y)  +O(1) \leq   |x_y^*| +O(1) = \C (x|y) +O(1)\\
\end{align*}
\qed

One can ask whether this theorem can be proven with slightly weaker hypotheses. For example we can hope that the hypothesis  $\B (x | x)=O(1)$ instead of hypothesis  \ref{hypshen52} would be sufficient to show the theorem. The next theorem shows that this is not the case. 

\begin{theorem}
There exists a function $\B :\fs \times \fs \rightarrow \fs $ such that $\B $ satisfies the hypotheses \ref{hypshen12}, \ref{hypshen32}, \ref{hypshen42}, \ref{hypshen22} (in theorem \ref{conditionelmarchebien}), $ \B (x|x) = O(1) $ and $|\B (x|y) - \C(x|y) |$ is not bounded.

\end{theorem}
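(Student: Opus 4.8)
The plan is to perturb $\C(\cdot\mid\cdot)$ by a penalty that is forced to be $O(1)$ on the diagonal but becomes large precisely when the condition is genuinely informative. Concretely, I would reuse Solovay's function processed by the star-operator and set
$$\B(x\mid y)=\min\bigl(\C(x),\ \C(x\mid y)+\alpha^{*}(\C(x\mid y))\bigr).$$
The penalty $\alpha^{*}(\C(x\mid y))$ is unbounded but, by Lemma~\ref{star_operator_prop}, grows extremely slowly; since $\C(x\mid x)=O(1)$ it is $O(1)$ on the diagonal, which will yield $\B(x\mid x)=O(1)$. The outer $\min$ with $\C(x)$ plays no role except to restore the upper bound: the additive term alone would push $\B(x\mid y)$ above $|x|$ when $x$ is incompressible relative to $y$, and replacing $\C(x)$ by $|x|$ here would break stability, since $|f(x)|$ need not be $\le|x|+c_f$ whereas $\C(f(x))\le\C(x)+c_f$.

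Then I would verify the five retained hypotheses. Upper semicomputability (hypothesis~\ref{hypshen12}) holds because $\C(x)$, $\C(x\mid y)$ and $\alpha^{*}$ are upper semicomputable and $\alpha^{*}$ is a non-decreasing order function, so the composition and the $\min$ are upper semicomputable uniformly in $x,y$. Hypothesis~\ref{hypshen32} is immediate from $\B(x\mid y)\le\C(x)\le|x|+O(1)$. For the counting hypothesis~\ref{hypshen42}, I would note that $\C(x)\ge\C(x\mid y)-O(1)$, whence $\B(x\mid y)\ge\C(x\mid y)-O(1)$ and $\{x:\B(x\mid y)\le n\}\subseteq\{x:\C(x\mid y)\le n+O(1)\}$, so the bound follows from the corresponding property of $\C(\cdot\mid\cdot)$, uniformly in $y$. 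For stability (hypothesis~\ref{hypshen22}) I would bound the two terms of $\B(f(x)\mid y)$ separately, using $\C(f(x))\le\C(x)+c_f$, $\C(f(x)\mid y)\le\C(x\mid y)+c_f$, and the key feature of $\alpha^{*}$ established in the proof of Theorem~\ref{pourK}, namely $\alpha^{*}(n+c)\le\alpha^{*}(n)+O(1)$; since $\min(A+a,B+b)\le\min(A,B)+\max(a,b)$ this gives $\B(f(x)\mid y)\le\B(x\mid y)+O_f(1)$. Finally $\B(x\mid x)=\min(\C(x),\,O(1)+\alpha^{*}(O(1)))=O(1)$.

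It remains to show that $|\B(x\mid y)-\C(x\mid y)|$ is unbounded, and here the construction must be calibrated so that the $\min$ does not silently collapse onto $\C(x)$. Subtracting $\C(x\mid y)$ gives the exact identity
$$\B(x\mid y)-\C(x\mid y)=\min\bigl(\C(x)-\C(x\mid y),\ \alpha^{*}(\C(x\mid y))\bigr),$$
so it suffices to produce pairs on which both arguments tend to infinity; the point is that I need $\C(x\mid y)$ itself to be large, not merely the gap $\C(x)-\C(x\mid y)$. I would take independent random strings $y_n,z_n$ of length $n$ and set $x_n=\langle y_n,z_n\rangle$, so that $\C(x_n)=2n+O(\log n)$ while $\C(x_n\mid y_n)=n+O(\log n)$. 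Then $\C(x_n)-\C(x_n\mid y_n)\to\infty$ and $\C(x_n\mid y_n)\to\infty$, hence $\alpha^{*}(\C(x_n\mid y_n))\to\infty$, so the displayed minimum tends to infinity. The main obstacle is exactly this last calibration: the penalty must be slow enough to be stable and to vanish on the diagonal, yet still visible on a family where the conditional complexity grows. This is why the counterexample cannot live on the set where $y$ merely determines $x$ (there $\C(x\mid y)=O(1)$ and the penalty is $O(1)$ as well), but must exploit conditions that compress $x$ substantially without trivialising it.
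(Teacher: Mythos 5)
Your construction is correct, but it is genuinely different from (and heavier than) the one in the paper. The paper simply takes $\B(x\mid y)=\min\bigl(\C(x),\,2\C(x\mid y)\bigr)$: the ``penalty'' added to $\C(x\mid y)$ is $\C(x\mid y)$ itself, which already vanishes on the diagonal (since $\C(x\mid x)=O(1)$), is stable under partial computable $f$ (it increases by at most $2c_f$), respects the counting bound (the set $\{x:2\C(x\mid y)\le n\}$ has size $O(2^{n/2})$), and produces an unbounded gap $\min\bigl(\C(x)-\C(x\mid y),\,\C(x\mid y)\bigr)$ on essentially the same witness family $x_n=\langle y_n,z_n\rangle$ that you use. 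Your version replaces this by the additive slow-growing term $\alpha^*(\C(x\mid y))$, and all your verifications go through: the composition is upper semicomputable because $\alpha^*$ is a non-decreasing upper semicomputable order, the counting bound follows from $\B(x\mid y)\ge\C(x\mid y)-O(1)$, stability follows from $\alpha^*(n+c)\le\alpha^*(n)+O(1)$ (which does hold, by the same argument as in the proof of Theorem~\ref{pourK}), and the unboundedness argument is sound. The one point worth flagging is your closing claim that the penalty \emph{must} be calibrated to grow slowly: that is not so for this theorem. Slowness of the perturbation is forced in Theorem~\ref{pourK} by the tight hypotheses \ref{contre_example pour K 4} and \ref{contre_example pour K 5} there; here the only constraints are the crude upper bound $|x|+O(1)$, the $O(2^n)$ counting bound, and vanishing on the diagonal, all of which the much larger perturbation $\C(x\mid y)\mapsto 2\C(x\mid y)$ already satisfies. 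So your proof buys nothing extra in exchange for importing the $\alpha^*$ machinery, but it is a valid alternative.
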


\noindent \textbf{Proof.} \\ 
We will show that  $$\B(x|y) = \min (\C (x), 2\C (x|y))$$ verifies all these properties but of course $|\B (x|y) - \C(x|y) |\not= O(1)$. 
\begin{itemize}
\item[\textbullet] $\B (x|y)$ is upper semi-computable because it is the $\min $ of two functions that are upper semi-computable. 
\item[\textbullet] We know that $\C (x) \leq |x| +O(1)$ so $\B(x|y) \leq |x| +O(1)$.
\item[\textbullet] We have $|\{ x | \C (x|y) \leq n \}|=O(2^n)$ so we have $|\{ x | 2\C (x|y) \leq n \}|=O(2^{n/2})$. And $|\{ x | \C (x) \leq n \}|=O(2^n)$ hence we have $|\{ x | \B (x|y) \leq n \}|=O(2^n)$. 
\item[\textbullet] $ \B (x|x) = O(1) $ comes from $2C(x|y) \leq 2 O(1)= O(1)$.
\item[\textbullet] For $f: \fs \rightarrow \fs $ we must show that there exists $c_f$ such that $\B (f( x  )|y)\leq \B (x |y ) +c_f $. We know already that there exists $c_f$ such that $\C(f(x)) \leq \C(x)+ c_f$ and $2\C (f(x)|y) \leq 2 \C(x|y) +c_f $ so we have: 
\begin{itemize}
\item[\textbullet] If $\B(f(x)|y) = \C(f(x))$ then by definition of $ \B $ (it is a $\min $): $$\B(f(x)|y) = \C(f(x)) \leq 2\C (f(x)|y)  \leq  2\C(x|y) +c_f.$$ In that case we have $\B(f(x)|y) = \C(f(x))\leq \B(x|y)+ O(1) $
\item[\textbullet] In a same way, if $\B(f(x)|y) = 2\C (f(x)|y)$ by definition of $\B $ we have: $$\B(f(x)|y) = 2\C (f(x)|y) \leq \C (f(x))  \leq  \C(x) +c_f. $$ So in that case $\B(f(x)|y) = 2\C (f(x)|y)\leq \B(x|y) +O(1) $ 
\end{itemize}
 So $\B(f(x)|y) \leq \B(x|y) + c_f$ 
\end{itemize}

So $\B$ verifies the hypotheses of the theorem and $|\B (x|y) - \C (x|y)| $ is unbounded.
\qed

\section*{Properties of the Solovay's $\alpha $-function}

In the paper we have shown that for each $h$ computable order:

\begin{itemize}

\item[\textbullet] for all $n$, $\alpha (h(n)) = \alpha (n) + O(1)$

\item[\textbullet] for all $n$, $\alpha (n) \leq h(n) +O(1)$

\end{itemize} 

More generally the proof show that  for each total computable function~$f$: 
$$\alpha (f(n)) \leq \alpha (n) + O(1)$$ because it suffice to consider the total computable order $h_f$ defined by: $$ h_f (n) = \max \{ f(k) | n\geq k \geq 0 \} $$  with the non-decreasing property of $\alpha $ and Lemma \ref{lemme ordre} we have: $$\alpha (f(n)) \leq \alpha (h_f (n)) \leq \alpha (n) +c_{h_f}.$$

As mentioned in the paper this property is not true for partial functions. 

\begin{proposition}[Folklore]

There exists a partial computable function $f: \N \rightarrow \N $ such that $$\forall k\in \N ~~\exists x_k \in \fs ~~ \alpha (f(x_k)) \geq \alpha (x_k) +k .$$

\end{proposition}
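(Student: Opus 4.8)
The plan is to use the optimal prefix-free machine $\mathbb{U}'$ itself as the witnessing partial computable function. The key observation is the equivalent description of Solovay's function recalled in the excerpt: $\alpha(n)$ is exactly the length of the shortest string $\tau_n$ with $\mathbb{U}'(\tau_n) > n$. Thus $\tau_n$ is an extremely short string — of length only $\alpha(n)$ — which $\mathbb{U}'$ nonetheless maps to a number exceeding $n$. So applying $\mathbb{U}'$ sends an input of tiny $\alpha$-value to an output of large $\alpha$-value, and this is precisely the failure of robustness we want. Concretely I would set $f=\mathbb{U}'$ and, for each target $k$, take $x_k=\tau_n$ for a suitably large $n$. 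The output side is bounded from below immediately: since $\mathbb{U}'(\tau_n)>n$ and $\alpha$ is non-decreasing, $\alpha(f(\tau_n))=\alpha(\mathbb{U}'(\tau_n))\geq\alpha(n)$.

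Next I would bound the input side from above. As a string, $\tau_n$ has length $\alpha(n)$, so under the identification of strings with integers it is a natural number below $2^{\alpha(n)+1}$, and by monotonicity $\alpha(\tau_n)\leq\alpha(2^{\alpha(n)+1})$. Now $m\mapsto 2^{m+1}$ is a computable order, so Lemma~\ref{lemme ordre}(i) gives $\alpha(2^{m+1})=\alpha(m)+O(1)$ uniformly in $m$; instantiating $m=\alpha(n)$ yields $\alpha(\tau_n)\leq\alpha(\alpha(n))+O(1)$.

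Combining the two bounds gives $\alpha(f(\tau_n))-\alpha(\tau_n)\geq\alpha(n)-\alpha(\alpha(n))-O(1)$, and it remains only to check that the right-hand side is unbounded in $n$. Since $\alpha$ is a sublinear order, $\alpha(m)\leq m/2$ for all large $m$; applying this with $m=\alpha(n)$, which tends to infinity because $\alpha$ is unbounded, gives $\alpha(\alpha(n))\leq\tfrac12\alpha(n)$ eventually, whence \[\alpha(f(\tau_n))-\alpha(\tau_n)\geq\tfrac12\alpha(n)-O(1)\xrightarrow[n\to\infty]{}\infty.\] Thus for each $k$ one chooses $n$ large enough that this difference is at least $k$ and sets $x_k=\tau_n$, which proves the proposition.

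The only delicate point is the upper bound $\alpha(\tau_n)\leq\alpha(\alpha(n))+O(1)$: the entire gain comes from the fact that the description $\tau_n$ is short, so that its own $\alpha$-value collapses all the way down to $\alpha(\alpha(n))$, and one has to invoke Lemma~\ref{lemme ordre} with the exponential order to make this precise. Everything else is just monotonicity of $\alpha$ together with its sublinearity, so I expect this step to be the main (and essentially only) substantive part of the argument.
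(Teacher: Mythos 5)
Your proof is correct, but it takes a genuinely different route from the paper's. The paper chooses $f$ to be the halting-time function of $\mathbb{U}'$ (a busy-beaver-style witness): for $n_k$ the input in $[0,2^k)$ with the longest finite halting time, any $m\geq f(n_k)$ allows one to compute a string of prefix complexity at least $k$, which forces $\K(m)\gtrsim k$ and hence $\alpha(f(n_k))\gtrsim k$, while $\alpha(n_k)\leq \log\log(n_k)+O(1)$ stays small. You instead take $f=\mathbb{U}'$ itself and evaluate it at the shortest programs $\tau_n$ for numbers exceeding $n$, getting the lower bound $\alpha(f(\tau_n))\geq\alpha(n)$ for free from monotonicity and the upper bound $\alpha(\tau_n)\leq\alpha(\alpha(n))+O(1)$ from Lemma~\ref{lemme ordre}(i) applied to the computable order $m\mapsto 2^{m+1}$; the gap $\alpha(n)-\alpha(\alpha(n))$ is unbounded because $\alpha$ is an unbounded order dominated by every computable order. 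Your argument is more self-contained and arguably cleaner, since it needs only the definition of $\alpha$, its monotonicity, and Lemma~\ref{lemme ordre}, whereas the paper's requires the extra incompressibility argument via the domain of $\mathbb{U}'$; on the other hand, the paper's halting-time witness is the one reused immediately afterwards to show that $\K+\alpha$ violates the stability condition, so it earns its keep there. All the steps you flag as delicate check out: $\tau_n$, viewed as an integer, is indeed below $2^{\alpha(n)+1}$, and $\alpha(m)\leq m/2$ eventually follows from Lemma~\ref{lemme ordre}(ii).
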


\begin{proof} 
We prove this proposition by contradiction. Let $f$ be a partial computable function such that:
$$f(n)= \begin{cases}
       \text{time of the computation of } \mathbb{U}' (n) & \text{ if } \mathbb{U}' (n)\downarrow \\
       \uparrow & otherwise
       \end{cases}$$
where $\mathbb{U}'$ be an optimal machine for \K. 
Let $n_k $ be the integer $i \in [0, 2^k-1]$ such that  the computation time of  $\mathbb{U}' (i)$ is maximal (but finite) among all $i \in \text{Dom} (\mathbb{U}') \cap [0, 2^k -1] $. 

With $m$ such that $m\geq f(n_k)$ we can compute a string $x $ such that $ \K (x ) \geq k $ because we can compute  $\text{Dom}(\mathbb{U}') \cap [0, 2^k -1]$  (because we know an upper bound of the biggest computation time) and take $x \not\in (\text{Dom}(\mathbb{U}') \cap [0, 2^k-1])$, let a function such that $g(m, k) =x$.

Hence for $m\geq f(n_k)$ we have: $$\K (m) + \K (k) +O(1) \geq \K (g(m, k)) \geq \K (x) \geq k $$ and  $$\K (m) \geq \log (n_k) - O(\log(\log (k))).$$ And finally by definition of $\alpha $ we have $\alpha (f(n_k)) \geq \log (n_k) - O(\log(\log (k)))$

So we cannot have $$ \log (n_k) - O(\log(\log (k))) \leq \alpha (f(n_k)) \leq \alpha (n_k) +c_f $$  because $\alpha (n) \leq \log (\log (n))$ (with Lemma \ref{lemme ordre} in the proof of Theorem \ref{pourK}). 

\qed
\end{proof}

With the same idea we can show that $\A = \K + \alpha $ do not verifies $\A (f(x)) \leq \A (x) + c_f$  because for  $n_k$: $$\K (f(n_k)) = \K (n_k)+ O(1).$$

\end{document}